\newtheorem{definition}{Definition}
\newtheorem{proposition}{Proposition}
\newtheorem{observation}{Observation}
\begin{document}
\title{Operational Ergotropy: suboptimality of the geodesic drive
}
\author{Pritam Halder$^{1}$, Srijon Ghosh$^{1}$, Saptarshi Roy$^{2}$, Tamal Guha$^{2}$}
\affiliation{$^1$ Harish-Chandra Research Institute, A CI of HBNI, Chhatnag Road, Jhunsi, Prayagraj - 211019, India \\ 
$^2$ QICI Quantum Information and Computation Initiative, Department of Computer Science,
The University of Hong Kong, Pokfulam Road, Hong Kong
}

\begin{abstract}

We put forth a notion of optimality for extracting ergotropic work, derived from an energy constraint governing the necessary dynamics for work extraction in a quantum system. Within the traditional ergotropy framework, which predicts an infinite set of equivalent pacifying unitaries, we demonstrate that the optimal choice lies in driving along the geodesic connecting a given state to its corresponding passive state. Moreover, in a practical scenario where unitaries are inevitably affected by environmental factors, we refine the existing definition of ergotropy and introduce the notion of operational ergotropy. It enables the characterization of work extraction in noisy scenarios. We find that for certain typical noise models, the optimal choice  which governs the Schrödinger part of the dynamics, aligns with the optimal drive in the unperturbed scenario. However, we demonstrate that such optimality is not universal by presenting an explicit counterexample.
Additionally, within this generalized framework, we discuss the potential for faster work extraction from quantum systems in the presence of noise.

 \end{abstract}

\maketitle

\section{Introduction}



In the realm of quantum thermodynamics, a fundamental pursuit involves the extraction of energy from a given quantum state, which serves as a temporary reservoir of energy, facilitating its subsequent transfer at a later time—a cornerstone objective for thermal machinery. Work extraction in quantum thermodynamics can be broadly classified into two categories. First with respect to a heat bath \cite{scully2003,michal2013,sandu2014} and the other one, which would be the quantity of interest in this letter, is ergotropy \cite{Allahverdyan_2004}. For a reference Hamiltonian, the maximum reversible work achievable via unitary evolution \cite{Lenard1978,Pusz1978} is termed as ergotropy \cite{Allahverdyan_2004}. Ergotropy is considered to be a pivotal concept in the development of efficient quantum batteries \cite{alicki_2013, gian2018, gian2019, donato2019, ghosh2021}. Along with protocol for work extraction from unknown state \cite{rosa2023_unknown}, extensive investigation has been undertaken to amplify ergotropy and optimize energy transfer by leveraging various quantum attributes including entangling operations \cite{alicki_2013,Hovhannisyan2013,Giorgi2015,Perarnau-Llobet2015}, quantum coherence \cite{francica2020,cakmak2020}, quantum correlations \cite{Perarnau-Llobet2015_correl,Binder2015,salvia2022,Touil2022,francica2022}, and measurement-feedback control \cite{Francica2017,koshihara2023}. Moreover, from the operational perspective, ergotropic gap \cite{marti2015} has been shown to be a witness \cite{mir2019} of quantum entanglement \cite{HHH2009}. Recently, by following the avenue of of local work extraction protocols, the notion of extended local ergotropy \cite{castellano2024} and work extraction via non-completely positive trace-preserving \cite{bhattacharyya2024} has also been studied. Despite all these 
research, a fundamental conceptual void regarding the amount of time required for work extraction remained unaddressed which we discuss in this letter.

The standard definition of ergotropy primarily focuses on optimizing work extraction, neglecting the requisite potential energy-time relation to operationalize the pacifying unitary in an experimental scenario. In particular, there should be a bound imposed on the strength of the potential that drives the system towards work extraction. The importance of an energy constraint comes from the fact that realistically there is only a finite amount of energy that is available in an experimental setting. The energy-constrained potential must guide the system until a specific temporal threshold, a consideration that has consistently remained overlooked within the established framework of ergotropy. With explicit time-dependence of work extraction, it can further provide additional insights into the effects of the environment during the process of driving the non-passive initial state which holds significance in fortifying the robustness of ergotropy. 

In this work, we incorporate the time explicitly during the cyclic work extraction process. This further shows that the pacifying unitary evolution is non-unique, i.e., there exist infinite possible planes in which the driving can be performed. However, the optimal time required to attain maximum work extraction, i.e., ergotropy, corresponds solely to the geodesic drive. Furthermore, our investigation delves into the impact of decoherence on ergotropy. The motivation for the same stems from the practical observation that the process of extracting work may encounter environmental perturbations in real-world scenarios instead of a closed unitary process. To address this we examine three prototypical noise models 
and show that although the value of ergotropy decreases in noisy time evolution there exists a specific type of noise that can facilitate reduction in optimal time required and also increase output power in some scenarios. 

The pinnacle of our study pertains to the observation that while the geodesic drive stands as an optimal strategy for extracting work within the context of above mentioned noise models, it may not invariably represent the most efficient approach for minimizing time in maximum work extraction scenarios. Our work offers a proof of principle approach in supporting this notion by demonstrating the presence of a dynamical process governed by a completely positive and trace preserving (CPTP) map, wherein the work extraction along geodesic path falls short of achieving optimality.

The paper is structured as follows. In Sec.~\ref{sec:2}, we address the inherent limitations in the conventional concept of work-extraction, which we modify by invoking explicit time dependence in the definition of ergotropy. This modification is utilized to confirm the non-equivalence of pacifying unitaries, thereby facilitating the derivation of optimal pacifying unitaries. In Sec.~\ref{sec:3}, we introduce the notion of \textit{operational ergotropy}, motivated from the experimental perspective where the dynamics governed by the pacifying unitary is hindered by the inevitable presence of environment during practical implementation. Moreover, in Sec.~\ref{markovian}, we demonstrate that akin to the noiseless scenario, employing geodesic drive proves optimal in the context of the known noise models, offering an additional advantage in accelerating the work extraction process. Conversely, in Sec.~\ref{sec:geodesic}, we prove that the optimality of geodesic drive is not universal. Finally, we conclude in Sec.~\ref{conclusion}.

\section{Introducing optimality in the standard notion of ergotropy}
\label{sec:2}
Before laying out the limitations of ergotropic work extraction and arguing for a generalization, let us first analyze the canonical formulation of ergotropic work extraction. 
In the literature, the standard notion of ergotropy is defined as the maximum amount of work that can be extracted from a quantum system \cite{Allahverdyan_2004}. Typically, the average energy of a $d$-dimensional quantum system $\rho$ is defined for a reference Hamiltonian $H$ as $\langle E \rangle = \Tr(\rho H)$. On the other hand, the ergotropy of the system $\rho$
is defined as 
\begin{eqnarray}
    \mathcal{W} = \Tr(\rho H) - \min_{U}\Tr(U \rho U^{\dagger} H),
    \label{eq:ergotropydef1}
\end{eqnarray}
where minimization has to be done over all possible unitary operations. 
\begin{figure*}
\includegraphics[width=\textwidth]{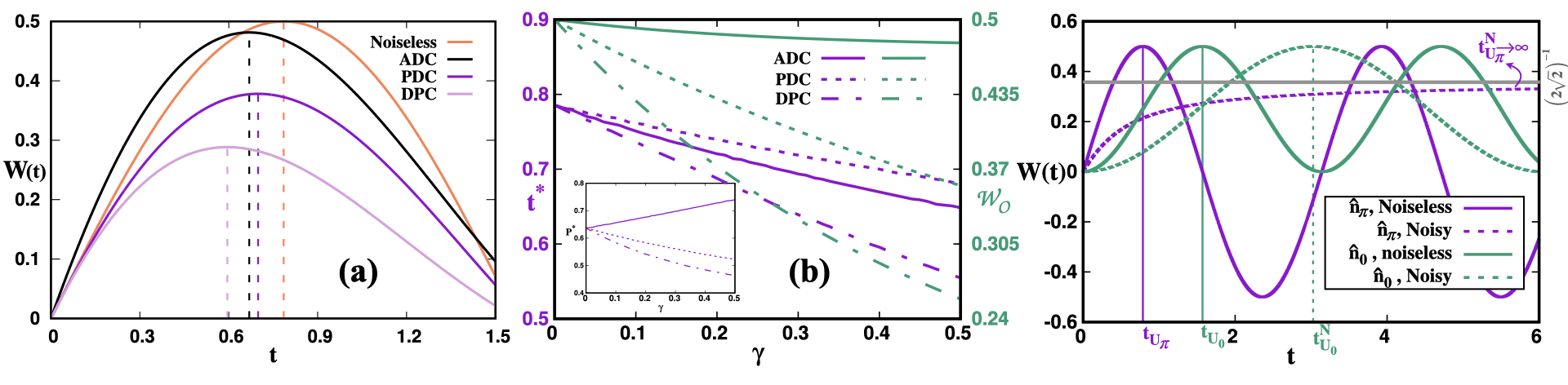}
\caption{(Color online.) The initial state is taken to be $\ket{+}$ with $\omega_{\max}=2$. In (a) we plot the extracted work, $W(t)$ (vertical axis) as a function of time, $t$ (horizontal axis) for the noiseless and some prototypical   Markovian noise models, namely amplitude damping channel (ADC), phase damping channel (PDC), and depolarizing channel (DPC). The decay rate $\gamma=0.4$ for all the noise models. In (b) we plot the minimum time, $t^*$ (left ordinate) for maximum work $\mathcal{W_O}$ (right ordinate) extraction with the decay rates $\gamma$ (abscissa). Both $t^*$ and $\mathcal{W_O}$ decreases with $\gamma$. Interestingly, we show in the inset of (b), the output power, $P^*$ (vertical axis) for the ADC increases with increase in the value of $\gamma$ (horizontal axis) unlike the case for PDC and DPC where the power decreases monotonically. In (c), the extracted work, $W(t)$ (ordinate) is plotted as a function of time, $t$ (abscissa) for two different pacifying unitaries in the presence and absence of noise described in Sec.~\ref{sec:geodesic}. The solid curves correspond to the noiseless case. The solid violet curve corresponds to the geodesic drive by rotation around $\bm{\hat{n}_{\pi}}=(0,-1,0)$, i.e., negative $y$ axis, while the green solid curve corresponds to pacification via rotation along the axis $\bm{\hat{n}_0} = \frac{1}{\sqrt{2}}(1,0,1)$. In the presence of the noise with $\zeta = 5$, the green dashed curve denotes the case where the rotation along the axis $\bm{\hat{n}_0}$ is employed while the violet dashed one corresponds to the geodesic drive. Note that, in the noiseless scenario, $t_{U_{\pi}}<t_{U_0}$ whereas in the noisy evolution the situation is reversed, i.e., $t_{U_{\pi}}^N>t_{U_0}^N$. All the axes are dimensionless.}
\label{fig:alpha_space}
\end{figure*}
The minimization requires the unitary to convert $\rho$ to its corresponding passive state $\rho_{p} = U \rho U^\dagger$ \cite{Lenard1978,Pusz1978}, from which no work can be extracted via any unitary cycle, hence we call such a unitary to be a \textit{pacifying unitary}. 
The pacifying unitaries are generated by turning on an external potential $V(t)$ for a time $\tau$ such that
\begin{eqnarray}
    U(\tau) = e^{-i\int_0^\tau H(t) dt},
    \label{eq:pu}
\end{eqnarray}
where $H(t) = H+V(t)$.
Without loss of generality, one can take the external potential to be constant during the duration of the work extraction process \cite{Binder2015, alicki_2013, Tamal2020}, i.e., $V(t) = V$ for $0 \leq t \leq \tau$ and $0$ elsewhere. 

To find out the form of the pacifying unitary  and the passive state $\rho_{p}$, 
one needs to decompose $\rho$ and $H$ in their spectral form as $\rho = \sum _{k = 1}^{d} r_{k} |r_{k}\rangle \langle r_{k}|$ and $H = \sum _{k = 1}^{d} \epsilon_{k} |\epsilon_{k}\rangle \langle \epsilon_{k}|$ respectively. Here $\{|r_{k}\rangle\}_{k}$ and  $\{|\epsilon_{k}\rangle\}_{k}$ are the eigenvectors and $\{r_{k}\}_{k}$ and  $\{\epsilon_{k}\}_{k}$ are the properly ordered eigenvalues of the $\rho$ and $H$ respectively, such that, $r_{k} \geq r_{k+1}$ and $\epsilon_{k} \leq \epsilon_{k+1}$ . Since the passive state can be mathematically written as $\rho_{p} = \sum_{k = 1}^{d} r_{k} |\epsilon_{k}\rangle \langle \epsilon_{k}|$ \cite{Allahverdyan_2004, farina2019}, by construction passive state energy becomes $\sum_{k} r_{k} \epsilon_{k}$. 
Interestingly the choice of the pacifying unitary is not unique. The set of pacifying unitaries is denoted by $\{ U_{\{\mu_k\}}\}$ where $k$ runs up to the system dimension, and each $\mu_k$ is a phase: 
\begin{eqnarray}
U_{\{ \mu_k \}} = \sum_k e^{i \mu_k} \ketbra{\epsilon_k}{r_k},
\label{eq:pugeneral}
\end{eqnarray}
with $0 \leq \mu_k < 2\pi$ $\forall k$, and without loss of any generality we can take $\mu_0 = 0$. Therefore we have a collection of unitaries that results in pacification. Now, the existing framework of ergotropy indicates that all these unitaries are equivalent as they extract the same amount of work. Till date, the role of these phases was mostly assumed to be irrelevant. However, in a realistic energy-constrained scenario, where the maximal strength of the engineered Hamiltonians is limited, it can be shown that all these unitaries take different \textit{times} for pacification, which we denote by $t_{U_{\{ \mu_k \}}}$. Consequently, this variation in temporal requirement lifts the degeneracy in their status. The energy constraint is imposed by the following condition on the Hamiltonian generating the pacifying unitary in Eq. \eqref{eq:pu} as
\begin{eqnarray}
    ||H(t)||_1 \leq \omega_{\max},
    \label{eq:energybound1}
\end{eqnarray}
where $||A||_1 = \text{Tr} ~\sqrt{AA^\dagger}$ denotes the trace norm of $A$. Physically, this limits the maximal amount of energy that can be pumped into the system. Intuitively, we expect this bound to translate to 
\begin{eqnarray}
    t^* \sim \frac{1}{||H(t)||_1} \sim \frac{1}{\omega_{\max}},
\end{eqnarray}
where $t^*$ is the minimum pacification time consistent with the energy bound in Eq. \eqref{eq:energybound1}, such that $t^* = \min_{\{\mu_k\}} t_{U_{\{ \mu_k \}}}$. Later we would demonstrate the relationship between $t^*$ and $\omega_{\max}$ explicitly. The requirement for the minimal pacification time leads to an emergent notion of optimality.

\begin{definition}
    The optimal pacifying unitary $U_{\{\mu_k^*\}} \in \{ U_{\{\mu_k\}}\}$ is the one that extracts work equal to the ergotropy in the minimum amount of time.
\end{definition}

Now the central question is, given an energy bound as in Eq. \eqref{eq:energybound1}, what is the optimal pacifying unitary? We address this in the following Proposition.
\begin{proposition}
    The optimal pacifying drive is along the geodesic connecting the states  $\rho$ to its corresponding passive state $\rho_p$. 
\label{lemma:1}
\end{proposition}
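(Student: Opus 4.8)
\emph{Proof strategy.} The plan is to turn the claim into the minimisation of a quantity built from the eigenphases of the candidate pacifying unitaries. Since $H(t)=H+V$ is taken constant, every pacifying unitary is $U_{\{\mu_k\}}=e^{-i\tau G}$ with $G:=H+V$. Diagonalise $U_{\{\mu_k\}}=\sum_j e^{i\phi_j}\ketbra{\phi_j}{\phi_j}$ with $\phi_j\in(-\pi,\pi]$ the principal eigenphases. Then the eigenvalues of $G$ are forced to be $(2\pi n_j-\phi_j)/\tau$ for some integers $n_j$, so $\|G\|_1=\tfrac1\tau\sum_j|2\pi n_j-\phi_j|\ge\tfrac1\tau\sum_j|\phi_j|$, the bound being attained at $n_j=0$. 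Feeding this into the energy constraint \eqref{eq:energybound1} and rescaling the (principal-branch, $n_j=0$) generator so that $\|G\|_1=\omega_{\max}$ gives the minimal pacification time
\begin{equation}
 t_{U_{\{\mu_k\}}}=\frac{1}{\omega_{\max}}\sum_j\bigl|\phi_j(\{\mu_k\})\bigr| .
 \label{eq:tmin}
\end{equation}
Thus Proposition~\ref{lemma:1} is equivalent to the statement that the eigenphase sum in \eqref{eq:tmin} is minimised, over the family \eqref{eq:pugeneral}, by the unitary that transports $\rho$ into $\rho_p$ along the geodesic. (This is a quantum-speed-limit-type statement: \eqref{eq:tmin} is a lower bound on the driving time, which the geodesic generator saturates.)

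I would then settle the qubit case by elementary Bloch-sphere geometry. A pacifying unitary sends $\ket{r_k}\mapsto e^{i\mu_k}\ket{\epsilon_k}$, hence on the Bloch sphere it rotates $\bm r$ into $\bm r_p$; since the spectrum, and therefore $|\bm r|$, is preserved, such a unitary is $e^{i\beta}R_{\bm n}(\alpha)$ with $R_{\bm n}(\alpha)=e^{-i(\alpha/2)\bm n\cdot\bm\sigma}$, $R_{\bm n}(\alpha)\bm r=\bm r_p$, and (after possibly flipping the axis orientation) $\alpha\in[0,\pi]$. Its eigenphases are $\beta\pm\alpha/2$, so \eqref{eq:tmin} gives a time $\propto|\beta+\alpha/2|+|\beta-\alpha/2|\ge\alpha$; moreover $\beta$ enters the generator only through a multiple of the identity and so leaves the state trajectory untouched, so we may set $\beta=0$ and are left to minimise the rotation angle $\alpha$ over all rotations carrying $\bm r$ to $\bm r_p$. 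Writing $\xi$ for the common angle that the axis $\bm n$ makes with $\bm r$ and $\bm r_p$, one has $\cos\Theta=\bm{\hat r}\cdot\bm{\hat r}_p=\cos^2\xi+\sin^2\xi\cos\alpha$, whence $\cos\alpha=(\cos\Theta-\cos^2\xi)/\sin^2\xi\le\cos\Theta$, with equality iff $\xi=\pi/2$. Hence $\alpha\ge\Theta$, the angle subtended by $\bm r$ and $\bm r_p$, and the minimiser is the rotation about $\bm n\propto\bm r\times\bm r_p$ through $\Theta$ -- exactly the great-circle geodesic joining $\rho$ to $\rho_p$. This proves the proposition for $d=2$ and, as a byproduct, fixes $t^*=\Theta/\omega_{\max}$, realising the expected $t^*\sim 1/\omega_{\max}$ scaling.

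For general $d$ the same reduction \eqref{eq:tmin} applies, and I would argue that the member of \eqref{eq:pugeneral} with the smallest eigenphase sum is precisely the one whose generator, modulo a multiple of the identity, drives $\rho$ into $\rho_p$ along the metric geodesic of the isospectral orbit of $\rho$, every other choice of the phases $\{\mu_k\}$ forcing the trajectory off that geodesic and lengthening it. The main obstacle I foresee is exactly this last step: pinning down which Riemannian metric is meant by ``the geodesic connecting $\rho$ to $\rho_p$'', and showing that the $\|\cdot\|_1$-minimal generator actually produces a metric geodesic (and not merely a short arc), while simultaneously controlling the $2\pi$ branch ambiguities and spectral degeneracies that can let several phase choices tie. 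For $d=2$ none of these complications arise and the Bloch-sphere computation above is conclusive; in higher dimensions I would, at worst, fall back on the variational characterisation \eqref{eq:tmin} together with the trajectory-length argument just sketched.
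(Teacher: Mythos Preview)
Your argument is correct and substantially more explicit than the paper's own proof, which is a two-sentence physical observation: because the driving Hamiltonian $H+V$ is held constant and every admissible drive saturates the same energy bound, all trajectories proceed at the same ``speed'', so the fastest pacification is along the shortest path, namely the geodesic. No metric is named, no optimisation is actually carried out, and the precise meaning of ``speed'' is left implicit; the value $t^*=\theta_0/\omega_{\max}$ is obtained only afterwards by a separate Bloch-sphere construction in the main text.

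Your route is genuinely different. You first convert the energy constraint into the minimal-time formula $t_{U}=\omega_{\max}^{-1}\sum_j|\phi_j|$ via the principal logarithm, reducing the proposition to an eigenphase-sum minimisation over the pacifying family. For $d=2$ you then parametrise each pacifying unitary as $e^{i\beta}R_{\bm n}(\alpha)$, bound the eigenphase sum below by $\alpha$, and minimise $\alpha$ over all rotations carrying $\bm r$ to $\bm r_p$ via the identity $\cos\Theta=\cos^2\xi+\sin^2\xi\cos\alpha$. This yields both the great-circle optimiser and $t^*=\Theta/\omega_{\max}$ in a single self-contained computation. What the paper's version buys is brevity and a dimension-agnostic intuition; what yours buys is an honest speed-limit derivation with an explicit inequality, at the price of a complete proof only for qubits---a limitation you rightly flag, and one the paper's heuristic does not actually resolve either.
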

\begin{proof}
As mentioned before, during the work extraction period which lasts for a time $\tau$, the driving Hamiltonian $H + V$ generates unitaries that drive $\rho$ to $\rho_p$ via the various trajectories with the same constant speed respecting the energy constraint. Therefore, the evolution that pacifies $\rho$ in the minimal time corresponds to the shortest path connecting $\rho$ and $\rho_p$ on the state space, referred to as the \textit{geodesic}.
\end{proof}


Therefore, in the noiseless case, 
where Proposition.~\ref{lemma:1} uniquely restricts $U_{\{\mu_k^*\}}$ to be the geodesic drive, the complete physics of work extraction for a given energy bound is characterized by the triad: $(\mathcal{W}, t^*, U_{\{\mu_k^*\}})$. Here, once $U_{\{\mu_k^*\}}$ is specified, $t^*$ gets fixed automatically. However, in a more generalized setting involving noise, Proposition. \ref{lemma:1} does not hold and hence the role of the triad to specify work extraction becomes more involved. The importance of such a characterization will be reemphasized subsequently.

Let us now demonstrate how the optimal pacifying unitary can be evaluated for two-dimensional quantum systems where the
geometry of the state space is much cleaner.
Consider an arbitrary qubit state $\rho$, specified by spherical polar coordinates $(r,\theta_0,\phi_0)$ of the Bloch sphere. The spectral form of the state is given by
\begin{eqnarray}
\rho = \lambda |\psi\rangle \langle \psi| + (1-\lambda)|\psi^\perp \rangle \langle \psi^\perp |,
\label{eq:rho}
\end{eqnarray}
where 
without loss of any generality, $\lambda = \frac{1+r}{2} \geq \frac12$, and $\ket{\psi} = \cos \frac{\theta_0}{2} \ket{0} + e^{i \phi_0} \sin \frac{\theta_0}{2} \ket{1}$.
The Hamiltonian with respect to which energies are measured is taken to be  $H = \ketbra{1}{1} =  \frac12 (\mathbb{I} - \sigma_3)$.
Here, $\sigma_0= \mathbb{I}$ and $\sigma_1 = \sigma_x, \sigma_2 = \sigma_y, \sigma_3 = \sigma_z$ are the Pauli matrices. The passive state of $\rho$ takes the form $\rho_p = \lambda |0 \rangle \langle 0| + (1-\lambda)|1 \rangle \langle 1 |$, which can be specified by $(r,0,0)$ in spherical polar coordinates.  Now, the ergotropy from Eq. \eqref{eq:ergotropydef1} is found to be $  \mathcal{W} = r \sin^2 \frac{\theta_0}{2}$.
The corresponding pacifying unitaries, following Eq. \eqref{eq:pugeneral}, can be defined by a one-parameter family
\begin{eqnarray}
    U_\mu = |0\rangle \langle \psi | + e^{i \mu}|1\rangle \langle \psi^\perp|, 
    \label{eq:genpaciuni}
\end{eqnarray}
where $0 \leq \mu < 2\pi$, and $\ket{\psi}, \ket{\psi^\perp}$ are the eigenvectors of $\rho$ (see Eq. \eqref{eq:rho}).

From Proposition. \ref{lemma:1} we know the optimal pacifying unitary evolves an arbitrary state along the geodesic connecting itself to its passive state. To evaluate it, we first provide a geometric explanation for the infinite cardinality of $\{U_\mu \}$ which contains pacifying unitaries corresponding to each allowed value of $\mu$. Recall that a unitary evolution between two points on the Bloch sphere of identical purity follows a circular path. This trajectory is created by the intersection of a plane connecting $\rho$ and $\rho_p$ with the Bloch sphere. Since there are infinitely many planes with this property we have the set $\{U_\mu \}$. Each value of $\mu$ corresponds to a different pacification trajectory and thereby a different plane.
 Therefore, the optimal pacification corresponds to a driving Hamiltonian whose direction  $\bf{\hat{n}_{\mu^*}}$ is normal to the plane of the geodesic, i.e.,
 \begin{eqnarray}
 {\bf{\hat{n}_{\mu^*}}} = \{ \sin \phi_0, -\cos \phi_0,0\},
 \label{eq:unitnormal}
\end{eqnarray}
(see Appendix \ref{app:B} for details). This in turn produces the optimal pacifying unitary $U_{\mu^*}\in \{U_\mu\}$, where $\mu^* = \phi_0 + \pi\mod2\pi$.
Note that $\bf{\hat{n}_{\mu^*}}$ resides entirely in the $x-y$ plane. Consequently, the target pacifying unitary is a rotation of angle $\theta_0$ about the axis $\bf{\hat{n}_{\mu^*}}$, i.e., $U_p \equiv U_{\mu^*} = \exp \big(-i (\theta_0/2) {\bf{\hat{n}_{\mu^*}.}} \bm{\sigma}\big)$. 
On the other hand, the driving Hamiltonian can be written as  $H(t) = H + V_{\mu^*} = K {\bf{\hat{n}_{\mu^*}.}} \bm{\sigma}$ $(K>0)$, where $\bm{\sigma}$ is the vector of the Pauli matrices $\{ \sigma_1, \sigma_2, \sigma_3 \}$. Note that in principle $K$ can be arbitrarily large. 
However, following the energy bound in Eq. \eqref{eq:energybound1} we get $2K \leq \omega_{\max}$.
Therefore, the fastest driving compatible with the energy bound is when $K = \omega_{\max}/2$ and the corresponding driving unitary is $ U(t) = \exp \big(-i(\omega_{\max}/2)t ~{\bf{\hat{n}_{\mu^*}.}} \bm{\sigma}\big)$. 
Finally, by comparing, i.e., $U(t^*) = U_p$, we get $t^* = \frac{\theta_0}{\omega_{\max}}$ which is the minimal time required to extract the ergotropy consistent with the energy bound given in Eq. \eqref{eq:energybound1}. In general, the time $\tau$ for drawing out the ergotropy from the system respects the following speed limit 
\begin{equation}
    \tau \geq t^*= \frac{\theta_0}{\omega_{\max}}.
    \label{eq:speedlim1}
\end{equation}
To conclude, we show that the complete characterization of the optimal work extraction from the single qubit systems is specified by triad  $\left(r\sin^2\frac{\theta_0}{2},\frac{\theta_0}{\omega_{\max}}, U_{\phi_0+\pi\mod2\pi}\right)$.



Let us now illustrate the choice of the optimal unitary with a simple example. Suppose the initial state of the system is $\rho = |+\rangle\langle+|$ and for which the passive state is $\rho_{p} =|0\rangle\langle0|$. The set of pacifying unitaries from Eq. \eqref{eq:genpaciuni} is given by $U_{\mu} = \ketbra{0}{+} + e^{i \mu} \ketbra{1}{-}$. Now consider two pacifying unitaries among the set $\{U_{\mu}\}$ indexed by $\mu = 0$ and $\pi$ which can be written as $U_{0} = (\sigma_{1} + \sigma_{3})/\sqrt{2}$ and $U_{\pi} = (\mathbb{I} + i \sigma_{2})/\sqrt{2}$. Here, $U_{0}$ is a rotation about $\bm{\hat{n}}_0 = \frac{1}{\sqrt{2}}(1,0,1)$ upto an overall phase, whereas $U_{\pi}$ indicates the dynamics along the geodesic connecting $\rho$ and $\rho_{p}$ on the Bloch sphere. Without loss of generality,    by choosing $\omega_{\max} = 2$, the pacification times corresponding to the above mentioned unitaries reads as $t_{U_{0}} = \frac{\pi}{2\sqrt{2}}$, and $t_{U_{\pi}} = \frac{\pi}{4}=t^*$.
It is straightforward to show that $U_\pi$ offers the minimal time of work extraction saturating Eq. \eqref{eq:speedlim1}. Therefore, among the set $\{ U_\mu \}$, $U_\pi$ turns out to be the optimal pacifying unitary.



\section{Operational Ergotropy}
\label{sec:3}
We are now equipped with the optimal pacifying unitary for extracting the maximum amount of work from a quantum system in the minimum amount of time respecting a given energy constraint. Nevertheless, from a realistic standpoint, achieving precise unitary implementations will always be hindered due to the inevitability of noise. In this section, we introduce the concept of a generalized notion of ergotropy, namely \textit{operational ergotropy}, where the work extraction is governed by \sout{some} noisy unitary dynamics. 
\begin{definition}
\label{def:2}
 The operational ergotropy $\mathcal{W}_O$ is defined by
\begin{equation}
  \mathcal{W}_O := \max_{t,\{\mu_k\}} W_{\{\mu_k\}}(t),
  \label{eq:dynamicalergotropy}
\end{equation}
with $W_{\{\mu_k\}}(t) = \text{Tr} (H \rho) - \text{Tr} (H \Lambda_t^{\{\mu_k\}}(\rho))$. Here $\Lambda_t^{\{\mu_k\}}$ denotes the dynamical map that takes $\rho \to \rho_t$.  
\end{definition}
 As before, $\{\mu_k\}$ represents the pacifying unitary that controls the Schrödinger part of the dynamics, see Eq. \eqref{eq:pugeneral}. Naturally, the first course of action is to investigate how noise affects the total maximal work that can be extracted compared to the noiseless case and how the time in which this work can be extracted gets altered.
The other essential point of interest is whether the optimal choice of $\{\mu_k^*\}$ coincides with the geodesic drive, optimal in the noiseless case. Physically, it means whether it is sufficient to use the pacifying unitary that was optimal in the noiseless situation, or whether it is preferable to use a different pacifying unitary (see Eq. \eqref{eq:genpaciuni}) that produces better work extraction features in the presence of noise. 
These questions can be addressed by analyzing the generalized triad $(\mathcal{W}_O,t^*, U_{\{\mu_k^*\}})$ where the extracted work is calibrated using the operational ergotropy. Interestingly, due to the presence of perturbations, $U_{\{\mu_k^*\}}$ alone cannot determine the optimal time in this scenario unlike in the noiseless case.
It is worth noting that when discussing the implementation of a unitary in the presence of noise, we specifically refer to its Hamiltonian generator, responsible for governing the Schrödinger part of the dynamics.

\subsection{Work extraction in the presence of paradigmatic noises}
\label{markovian}
Let us first consider the noise to be a memoryless channel, i.e., Markovian noise. The action of the channel on the initial state $\rho$ can be represented as $ \Lambda_t^{\{\mu_k\}}(\rho) = e^{\mathcal{L}_{\{\mu_k\}} t} \rho,$
where the Lindbladian $\mathcal{L}_{\{\mu_k\}}$ satisfies $\dot{\rho} = \mathcal{L}_{\{\mu_k\}} \rho  = -i [H + V_{\{\mu_k\}},\rho] + \mathcal{D}(\rho)$. Here, $H + V_{\{\mu_k\}}$ is the generator of $U_{\{\mu_k\}}$ in Eq. \eqref{eq:pugeneral}. $\mathcal{D}$ is the dissipator for the corresponding Markovian noise. For the sake of simplicity, we will again demonstrate our findings with two-level quantum systems (See Appendix \ref{app:C} for details). Specifically, we again consider $\rho = \ketbra{+}{+}$ to be the system's initial state, and $\omega_{\max} = 2$. We analyze the effect of amplitude damping channel (ADC), phase damping channel (PDC), and depolarizing channel (DPC) (see Appendix \ref{master_eqn} for the form of their dissipators) in the process of work extraction and argue whether the optimality of the geodesic drive still holds. 

\subsubsection{Noise speeds up work extraction}

In the presence of noise, the operational ergotropy $\mathcal{W}_O$ decreases with an increase in the noise strength, $\gamma$, for all the noise models under consideration. Moreover, the extent of reduction is inherently dependent upon the specific nature of the noise (Fig.~\ref{fig:alpha_space}(a) and (b)). The next quantity of interest from the triad $(\mathcal{W}_O,t^*, U_{\{\mu_k^*\}})$ is the optimal time, $t^{*}$. It is the earliest instant at which the extracted work attains its maximum value. Mathematically, $t^*$ is the temporal maximizer in the definition of operational ergotropy in Eq.~\eqref{eq:dynamicalergotropy}. 
Alongwith $\mathcal{W}_O$, interestingly, the optimal time required to extract the maximum work from the system also decreases monotonically with increasing value of $\gamma$, i.e., 
\begin{eqnarray}
t^{*}(\gamma) < t^{*}(\gamma^\prime) \quad \text{when} \quad \gamma > \gamma^\prime.    
\end{eqnarray}
The above investigation shows that although the maximal achievable work (operational ergotropy) decreases, it can be obtained faster compared to the noiseless case, see Fig. \ref{fig:alpha_space}(a) and (b). This also translates into the enhancement of the maximal output power,  $P^* (\gamma) = \mathcal{W}_O(\gamma)/t^*(\gamma)$ in the presence of environmental effects, see inset of Fig. \ref{fig:alpha_space} (b). Counterintuitively, when the work extraction is affected by ADC, the maximum attainable power becomes larger than the perfect unitary dynamics, however, this is not the situation for DPC and PDC. 


\subsubsection{Optimality of the geodesic drive for the considered Markovian noise models}
The last quantity of the triad $(\mathcal{W}_O,t^*, U_{\{\mu_k^*\}})$ that we investigate is the optimal unitary controlling the Schrodinger part of the dynamics $U_{\{\mu_k^*\}}$. 
In the presence of the considered Markovian noise models, our analysis reveals that the optimal drive for the Schrodinger part of the dynamics remains the one that was optimal in the noiseless case (See Appendix \ref{app:d} for details). 

The natural follow-up to this is to ask whether the continuity of optimality of the pacifying unitary in the noiseless case is a generic feature or not. In the subsequent section, we provide an explicit example of falsifying the above claim.

\subsection{Sub optimality of the geodesic drive}
\label{sec:geodesic}
\noindent
So far, 
we have demonstrated that the optimal unitary for the noiseless scenario coincides with the optimal unitaries in the noisy dynamics for some specific physical noise models. Nevertheless, this observation does not hold for arbitrary noisy dynamics. In this section, we provide a counterexample as a proof-of-principle demonstration of the sub-optimality of the geodesic drive. This turns out to be particularly important since it concretely establishes the importance of a more generalized notion of ergotropy, namely the operational ergotropy that we have introduced in Def. \ref{def:2}. Furthermore, the following example reinforces the necessity of considering the entire triad $(\mathcal{W}_O,t^*, U_{\{\mu_k^*\}})$ for a complete description of the work extraction.

Consider the evolution of a general state $\rho$ under the influence of noise, which can be described as  
\begin{eqnarray}
    \rho(t + \Delta t) = p(t) ~\mathcal{U}_{\Delta t}(\rho(t)) + (1-p(t)) \rho(t),
    \label{eq:noiselam}
\end{eqnarray}
where $\mathcal{U}_t(\cdot) = U(t) (\cdot) U^\dagger(t)$. Here, the probability $p(t) = 1 - \exp(-\zeta (1 - f_t)),$ with $f_t = \text{Tr} (\rho(t)\sigma)$ is the overlap of $\rho(t)$ with the attractor which is a fixed state $\sigma$ and $\zeta > 1$. This engineered noise model represents a completely positive trace preserving (CPTP) map. 
If the noiseless evolution trajectory passes near the attractor $\sigma$ the dynamics slows down considerably. Conversely, if the pacification is done through a trajectory that passes far away from $\sigma$, the noise effectively does not alter the noiseless dynamics. More precisely, a given evolution becomes insensitive to the noise when, throughout the trajectory (for all times $t$), $\zeta(1-f_t) >> 1$ is satisfied, which guarantees $p(t) \approx 1 ~\forall t$.
Now, when we consider a pacifying unitary generated by $H+V$, we have $U(\Delta t) = \mathbb{I} - i (H+V) \Delta t$. This translates to
\begin{eqnarray}
\frac{d \rho}{dt} =  -i [\mathcal{H}(t),\rho],
\label{eq:nl2}
\end{eqnarray}
where $\mathcal{H}(t) = p(t) (H + V)$, and $[A,B] = AB - BA$ is the commutator of $A$ and $B$. Note that the effect of the noise is to perturb the Hamiltonian in a time-dependent way. Furthermore, Eq. \eqref{eq:nl2} implies that the action of the noise retains the unitarity structure of the dynamics. However, this unitary is generated by the Hamiltonian  $\mathcal{H}(t)$ that essentially scales the noiseless Hamiltonian $H+V$ via the time-dependent scaling factor $p(t) \leq 1$. It effectively slows down the dynamics compared to the noiseless scenario, although the trajectory remains invariant.
As mentioned before, we recover the noiseless (unperturbed) evolution when $p(t) = 1 ~\forall t$.


To demonstrate the sub-optimality of the geodesic path, let us revisit the same example we have discussed in Sec. \ref{sec:2} where the initial state of the system is $\rho = |+\rangle \langle +|$, and $\omega_{\max} = 2$. The attractor of the noise model is taken to be $\sigma = \cos \psi \ket{0} + \sin \psi \ket{1}$ with $\psi = \pi/8$. 
From our previous analyses, we know that the set of pacifying unitaries is $U_\mu = \ketbra{0}{+} + e^{i\mu}\ketbra{0}{-}$, with the optimal drive given by $U_{\pi}$ which in turn traces out a quarter circle on the Bloch sphere from $\ket{+}$ to its passive state $\ket{0}$.
Now we are set to address the question of whether the optimal pacifying unitary in the noiseless case remains so in the noisy cases as well.
\begin{observation}
    The drive along the geodesic leads to sub-optimal work extraction.
\end{observation}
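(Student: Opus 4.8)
The plan is to exhibit a second pacifying unitary whose trajectory stays far from the attractor $\sigma$, so that it suffers essentially no slowdown, and then compare its completion time against the geodesic's completion time once the geodesic's time-dilation from passing near $\sigma$ is accounted for. Concretely, I would work with the one-parameter family $U_\mu = \ketbra{0}{+}+e^{i\mu}\ketbra{1}{-}$ from Eq.~\eqref{eq:genpaciuni}, each of which rotates $\ket{+}$ to its passive state $\ket{0}$ along a circular arc on the Bloch sphere. The geodesic choice $U_\pi$ (rotation about $\bm{\hat n}_\pi=(0,-1,0)$) traces the quarter great-circle in the $x$--$z$ plane; since the attractor $\sigma$ with $\psi=\pi/8$ has Bloch vector $(\sin(\pi/4),0,\cos(\pi/4))$, which lies \emph{in} that very plane and between $\ket{+}$ and $\ket{0}$, the geodesic trajectory passes directly through (or extremely close to) $\sigma$. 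Along that arc $f_t=\Tr(\rho(t)\sigma)\to 1$, so $p(t)=1-e^{-\zeta(1-f_t)}\to 0$ near the attractor, and by Eq.~\eqref{eq:nl2} the effective Hamiltonian $\mathcal H(t)=p(t)(H+V)$ is sharply suppressed there, dilating the time to reach $\ket{0}$.

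The key steps, in order, are: (i) for each $\mu$, parametrize the noiseless arc explicitly (angle swept $=\theta(t)$ at constant Bloch-sphere speed $\omega_{\max}=2$) and write the noisy completion time as $t^N_{U_\mu}=\int_0^{\theta_0}\!\frac{d\theta}{2\,p(\theta)}$, using the fact from Sec.~\ref{sec:geodesic} that the noise only rescales the generator and leaves the trajectory invariant; (ii) compute $f_\theta=\Tr(\rho(\theta)\sigma)$ along each arc as a function of the swept angle — for the geodesic it dips toward $1$, for the orthogonal-plane drive $U_0$ (rotation about $\bm{\hat n}_0=\tfrac1{\sqrt2}(1,0,1)$) the arc leaves the $x$--$z$ plane immediately, keeping $f_\theta$ bounded away from $1$; (iii) choose $\zeta$ (the paper takes $\zeta=5$) large enough that $p(\theta)\approx1$ uniformly along the $U_0$ arc, so $t^N_{U_0}\approx t_{U_0}=\tfrac{\pi}{2\sqrt2}$, while the integrand for the geodesic blows up (or at least grows substantially) near the near-$\sigma$ portion, forcing $t^N_{U_\pi}>t^N_{U_0}$; (iv) conclude that the geodesic drive, though optimal at $\zeta=0$ where $t_{U_\pi}=\tfrac\pi4<\tfrac{\pi}{2\sqrt2}=t_{U_0}$, is \emph{sub-optimal} for the given CPTP noise — the ordering of completion times is reversed, matching Fig.~\ref{fig:alpha_space}(c).

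One subtlety to handle carefully: if the geodesic passes \emph{exactly} through $\sigma$, then $p\to0$ there and the arc never actually reaches $\ket{0}$ in finite time — the maximum extractable work along that path would be capped strictly below $\mathcal W$. So I would either (a) note that $\sigma$ lies on the open arc and the dynamics asymptotically stalls, making $t^N_{U_\pi}=\infty$ while $W_{U_\pi}(t)$ saturates below ergotropy, or (b) if a near-miss geometry is intended, keep $\psi$ slightly off so the trajectory grazes rather than hits $\sigma$, giving a large-but-finite $t^N_{U_\pi}$; either way the observation holds, and I would present whichever matches the numerics in Fig.~\ref{fig:alpha_space}(c) (the caption's statement $t^N_{U_\pi}>t^N_{U_0}$ with finite values suggests the grazing picture). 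The main obstacle is step (iii): showing the inequality $t^N_{U_\pi}>t^N_{U_0}$ \emph{robustly} rather than just numerically — this requires a clean lower bound on $\int d\theta/p(\theta)$ for the geodesic (e.g.\ bounding $1-f_\theta$ from above on a sub-interval where the trajectory is within some $\delta$ of $\sigma$, then $p\le \zeta(1-f_\theta)\le\zeta\delta'$ there) together with a matching upper bound $p(\theta)\ge 1-e^{-\zeta c}$ with $c>0$ along the entire $U_0$ arc; assembling these two estimates and checking the chosen $\zeta$ makes them incompatible with $t^N_{U_\pi}\le t^N_{U_0}$ is the crux of the argument.
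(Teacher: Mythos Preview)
Your strategy coincides with the paper's: it takes your option (a), noting that $\sigma$ (Bloch vector $\tfrac{1}{\sqrt2}(1,0,1)$) lies \emph{exactly} on the geodesic arc, so the $U_\pi$ dynamics stalls at that fixed point ($t^N_{U_\pi}\to\infty$, extracted work saturating at $\tfrac12-\sin^2\psi=\tfrac{1}{2\sqrt2}$), while the $U_0$ drive reaches the passive state in finite time $t^N_{U_0}\approx 3.026$ with the full $\mathcal W_O=\tfrac12$.

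One correction to your step (iii): the $U_0$ rotation axis $\hat{\mathbf n}_0=\tfrac{1}{\sqrt2}(1,0,1)$ \emph{is} the Bloch vector of $\sigma$, so along the $U_0$ arc the overlap $f_\theta=\tfrac12(1+\tfrac{1}{\sqrt2})\approx 0.854$ is \emph{constant}, not small; with $\zeta=5$ this gives $p\approx 0.52$, not $p\approx 1$, which is precisely why $t^N_{U_0}\approx 3$ rather than the noiseless $t_{U_0}$. Your essential point---that $f_\theta$ is bounded away from $1$ along $U_0$, hence $p$ is bounded away from $0$ and the drive completes in finite time---is correct and is all that the argument needs.
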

When the optimal unitary in the noiseless scenario, $U_\pi$ , a rotation about the axis $\bm{\hat{n}_{\pi}}=(0,-1,0)$, is used for work extraction in the above noisy model, the effective dynamics becomes asymptotically slow. The above phenomenon is due to the fact that $\sigma$ lies on the pacification trajectory,  the fixed point of the dynamics. The extractable work in turn saturates to  $\frac12 - \sin^2 \psi|_{\psi=\pi/8} = \frac{1}{2\sqrt{2}}$, as shown in Fig. \ref{fig:alpha_space} (c). The saturation time in this case is asymptotically long, i.e., $t_{U_{\pi}}^N \to \infty$, where $N$  indicates the presence of noise. Therefore, the geodesic drive turns out to be useless for work extraction. 
On the other hand, by considering our benchmark example involving a non-optimal pacifying unitary, denoted as $U_0$, which entails a rotation around $\bm{\hat{n}_0}= \frac{1}{\sqrt{2}}(1,0,1)$, we can achieve the desired passive state.
Hence, the maximum amount of extractable work coincides with $\mathcal{W}_O = 1/2$, and is achieved at $t_{U_{0}}^N \approx 3.026$. Thus, it is evident that $U_0$ performs better than the geodesic drive. Therefore, in the presence of noise, we obtain a reversed hierarchy of pacifying unitaries, i.e, $t^N_{U_\pi} > t^N_{U_0}=t^*$, whereas in the noiseless case, we had $t_{U_\pi} < t_{U_0}$. Recall that $t_{U_\pi}$ is the optimal time $t^*$ for the noiseless scenario. Finally, we show that $U_0$ turns out to be the optimal drive in this case. The details of the proof of the optimality are provided in the Appendix. \ref{app:d}. The complete work extraction triad for this noise model is $(\frac12, 3.026, U_0)$. 

\section{Conclusion}
\label{conclusion}
Work extraction is one of the primary goals of thermodynamics from its very inception. From steam engines to quantum batteries, effective extraction of work has turned out to be crucial in building technologies that run the world in both classical and quantum regimes. 
In the quantum domain,  one of the standard frameworks of work extraction is carried out via implementing unitary operations, referred to as ergotropy. Here we identify that there exists a family of unitaries, which we call pacifying unitaries, that lead to the same amount of extracted work. As a result, the entire set was deemed to be identical. However, when we limit the maximal energy that can be pumped into the system, all the pacifying unitaries typically become inequivalent in terms of the time required to extract the maximum work. Here we introduced the concept of optimality in which the pacifying unitary that extracts work equal to the ergotropy in the shortest amount of time is considered optimal. Moreover, 
we generalized the purview of ergotropy to situations where the unitaries used for ergotropic work extraction are tampered with noise. This led to a concept of operational ergotropic work that is extracted by generalized CPTP maps whose noiseless avatar corresponds to a pacifying unitary. Here we addressed two important questions: In the presence of noise, what can be the optimal choice of unitary for the Schrodinger part of the dynamics? and Is it the optimal pacifying unitary obtained in the noiseless case? 

For some paradigmatic Markovian noise models, we found that the best work extraction features are indeed obtained when the Schrodinger part of the dynamics is governed by the optimal pacifying unitary corresponding to the noiseless case. Interestingly, in these cases, although noise reduced the maximal work that can be extracted, it may be extracted faster compared to the noiseless case. However, we reported that the choice of optimal dynamics is not universal. Strikingly, we furnished an example of a noise model where the optimal pacifying unitary of the noiseless sector performs worse than other pacifying unitaries thereby establishing its sub-optimality in the generalized framework. 

Although the fundamental laws of nature are time translation invariant,  the irreversibility of thermodynamic processes results in a definite time direction. In this letter, we established that not only the arrow of time, but the amount of time turns out to be particularly important in thermodynamic work extraction processes. In both noiseless and noisy scenarios, where we generalized the concept of ergotropy, by focusing on time requirements, we prescribed how to filter out optimal work extraction strategies from a set of ``apparently" equivalent ones. We believe that our work would pave the way for future research exploring the time requirements in general work extraction strategies in various scenarios involving quantum correlations (operational daemonic ergotropy) \cite{Francica2017} or non-Markovianity. 


\section*{Acknowledgement}
We thank Aditi Sen (De) for useful discussions and for providing insights during the preparation of the manuscript.

\bibliographystyle{apsrev4-1}
\bibliography{bib.bib}

\appendix

\section{The normal to the plane of the geodesic}
\label{app:B}
To compute the unit normal $\hat{n}$, we first define two vectors connecting the origin of the Bloch sphere to  $\rho$ and $\rho_p$. \begin{eqnarray}
    \bm{V}_{\rho} &=& \{r \sin \theta_0 \cos \phi_0, r \sin \theta_0 \sin \phi_0, r \cos \theta_0\}, \nonumber \\
    \bm{V}_{\rho_p} &=& \{0,0,r\}.
\end{eqnarray}
These two vectors lie in the plane of the geodesic. Therefore, the normal to this plane is given by
\begin{eqnarray}
    {\bf{n} = \bf{V}_{\rho}} \times {\bm{V}_{\rho_p}} = r^2 \sin \theta_0 \{ \sin \phi_0, -\cos \phi_0,0\}.
\end{eqnarray}
The unit normal is 
\begin{eqnarray}
 {\bf{\hat{n}}} = \frac{\bm{n}}{||\bm{n}||} = \{ \sin \phi_0, -\cos \phi_0,0\}, 
 \label{eq:unitnormal}
\end{eqnarray}
where $||\bm{v}|| = \sqrt{\bm{v}.\bm{v}}$ is the norm of the vector $\bm{v}$.
This entirely resides in the $XY$ plane. The corresponding pacifying drive consistent with the energy bound is given by
\begin{eqnarray}
  U(t) = \exp (\frac{-i\omega_{\max}t}{2} {\bf{\hat{n}.}} \bm{\sigma}).
\end{eqnarray}
The extracted work as a function of time for the geodesic drive 
\begin{eqnarray}
   \nonumber W(t) &=& \text{Tr} (H \rho) - \text{Tr} (H U(t)\rho U^\dagger(t)) \\ &=& r\sin{\bigg(\frac{\omega_{\max }}{2}t\bigg)}\sin{\bigg(\theta_0-\frac{\omega_{\max} }{2}t\bigg)}.
   \label{erg_optimal}
\end{eqnarray}
We obtain work equal to the ergotropy for $t = t^* = \frac{\theta_0}{\omega_{\max}}$.

\section{Master equation for different channels}
\label{master_eqn}
For any state $\rho$, the dissipators of various noise Markovian noise models are as follows. 
\begin{eqnarray}
    \text{Amplitude damping:} &~&\gamma(2\sigma_-\rho\sigma_+ - \sigma_+\sigma_-\rho - \rho\sigma_+\sigma_-), \nonumber \\
    \text{Phase damping:} &~&\frac{\gamma}{2}(2\sigma_3\rho\sigma_3 - 2\rho), \nonumber \\
    \text{Depolarization channel:} &~&\frac{\gamma}{4}(2\sum_{i=1}^3\sigma_i\rho\sigma_i - 6\rho). \nonumber
\end{eqnarray}

Here, $\sigma_i$s are the usual Pauli matrices with $\sigma_{+}=\ketbra{1}{0},\sigma_{-}=\ketbra{0}{1}$ as raising and lowering operator respectively, and $\gamma$ is the decay rate.

\section{Generators for different Pacifying unitaries}
\label{app:C}
First we note that $K \hat{n}_\mu.\bm{\sigma} = H + V_\mu$ generates $U_\mu$. To obtain the generator, we expand and equate $e^{-i K \hat{n}_\mu.\bm{\sigma} t_\mu}$ to $U_\mu$. In particular, we have $ e^{-i K \hat{n}_\mu.\bm{\sigma}t_\mu} = U_\mu = \sum_{k = 0}^3 a_\mu^k \sigma_k$, where $a^k_\mu = \frac12 ~\text{Tr} (\sigma_k U_\mu)$. We now can easily solve for $t_\mu$ and $\hat{n}_\mu$  from the following expansion by equating each coefficient of $\sigma_i$s 
\begin{eqnarray}
\cos (K t^*) ~\sigma_0 - i \sin (K t^*) ~\hat{n}_\mu.\bm{\sigma} =  \frac12 ~\text{Tr} (\sigma_k U_\mu).
\end{eqnarray}
Moreover, following the energy bound in Eq. \eqref{eq:energybound1} of the main text, we can substitute $K = \frac{\omega_{\max}}{2}$.


\section{Optimal drives in presence of noise}
\label{app:d}
\begin{figure}[ht]
\vspace*{0.2in}
\includegraphics[width=0.9\linewidth]{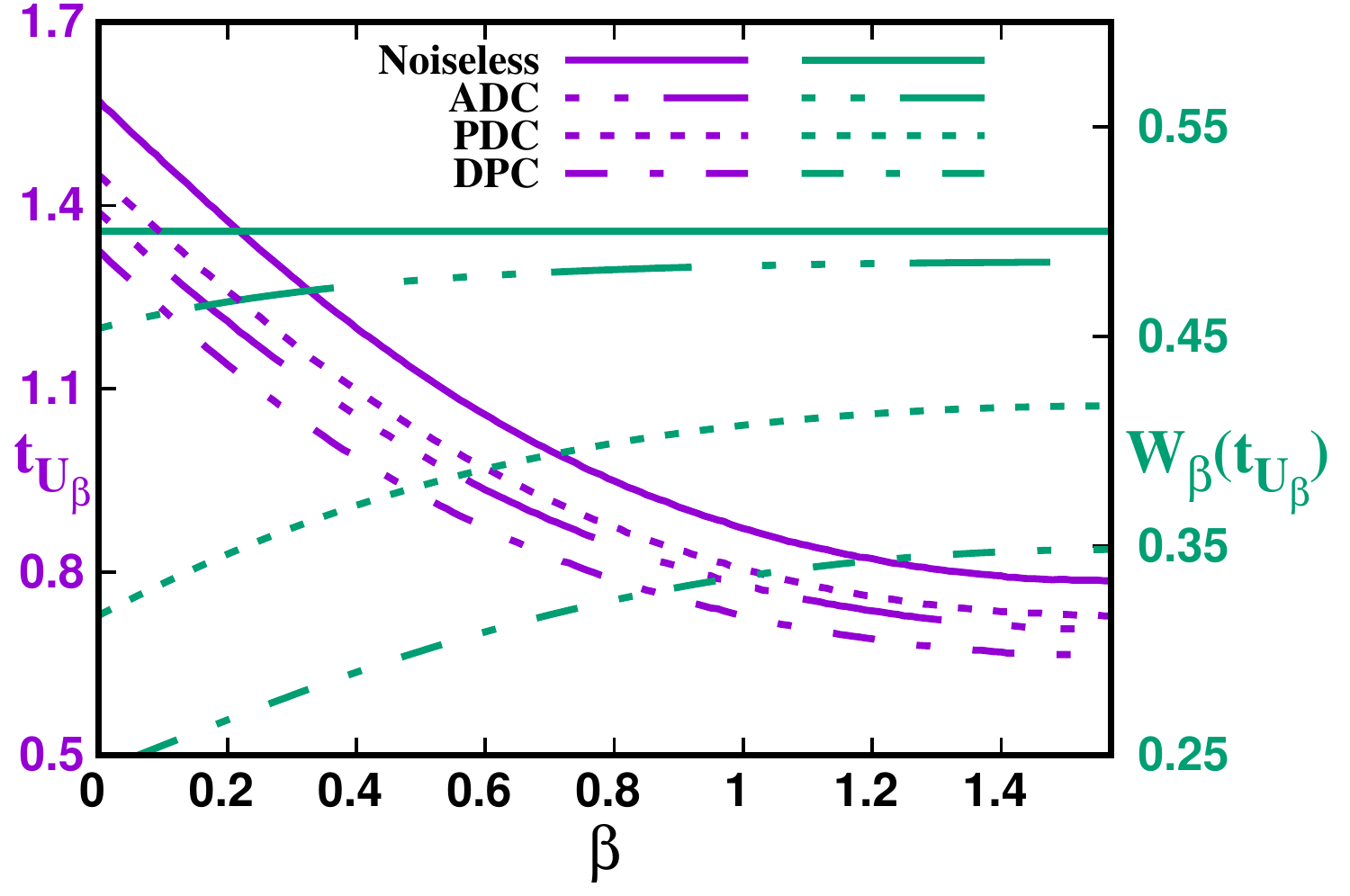}
\caption{(Color online.) Minimal pacification time $t_{U_{\beta}}$ (left ordinate, violet color) and work extraction $W_{\beta}(t_{U_{\beta}})$ (right ordinate, green color) are plotted as a function of $\beta$ by taking initial state $\ket{+}$ and $\omega_{\max}=2$, $\gamma=0.25$. Solid lines correspond to noiseless scenario whereas the dashed lines signify prototypical noise models , i.e., ADC, PDC and DPC. Noiseless optimal unitary $U_{\beta=\pi/2}\equiv U_{\mu=\pi}$ remains the optimal one for Schrodinger part of the dynamics in the mentioned noisy evolution. All the axes are dimensionless.}
\label{fig:optimzation} 
\end{figure}
A set of pacifying unitaries was identified for the considered initial state $\rho = \ket{+}\bra{+}$ drives $\rho$ to its corresponding passive state $\ket{0}\bra{0}$. Geometrically, these unitaries pacify through various distinct trajectories on the Bloch sphere. These trajectories are circular paths obtained from the intersection of the plane passing through the two points on the Bloch sphere corresponding to states $\ket{+}$ and $\ket{0}$, and the Bloch sphere. Each such pacification trajectory is completely characterized by the particular plane induced by a given pacifying unitary.
The planes connecting $\ket{+}$ and $\ket{0}$ are characterized  by a one parameter family of planes 
\begin{eqnarray}
    \bf{\hat{n}}_\beta =\Big(\frac{\cos\beta}{\sqrt{2}},-\sin\beta,\frac{\cos\beta}{\sqrt{2}}\Big),
\end{eqnarray}
where $\beta\in[0,\pi/2]$.
Note that for a complete characterization, we need to consider $\bf{\hat{n}}_{-\beta}$ as well. However, the optimal drives, at least for the considered noise models come from the drives generated $\bf{\hat{n}}_\beta$. The time dependent drive along $\bf{\hat{n}}_\beta$ is 
\begin{eqnarray}
    U_{\beta}(t)=\exp\left(-it\bf{\hat{n}}_\beta .\bm{\sigma} \right).
\end{eqnarray}
Now the quantities of interest are $t_{U_\beta}$, which is minimal time required for work extraction for the drive along $\bf{\hat{n}}_\beta$ and the corresponding extracted work $W_{\beta}(t_{{U}_{\beta}})$.
\subsection{Optimality of geodesic drive for paradigmatic Markovian noise models}
For the Markovian noise models considered in this work, by optimizing over $\beta$ as shown in Fig. \ref{fig:optimzation} we find that the minimal time $t_{U_\beta}$ required for work extraction is achieved for $\beta = \pi/2$. At this time the extracted work is maximized simultaneously. 
Physically, this corresponds to a rotation about the $-y$ axis. It in turn corresponds to $U_{\mu = \pi}$ drive 
considered in Sec.~\ref{markovian}. Therefore, we get $t^*=t_{U_{\beta=\pi/2}} = t_{U_{\mu=\pi}}^N$. This analysis holds true for other choices of transition rates $\gamma$ as well.

\subsection{Demonstrating the suboptimality of geodesic drive}
For the case of the noise model considered in Sec.~\ref{sec:geodesic},  the search for optimality is demonstrated in Fig.~\ref{fig:handmade_optimzation}. 
\begin{figure}[ht]
\includegraphics[width=0.9\linewidth]{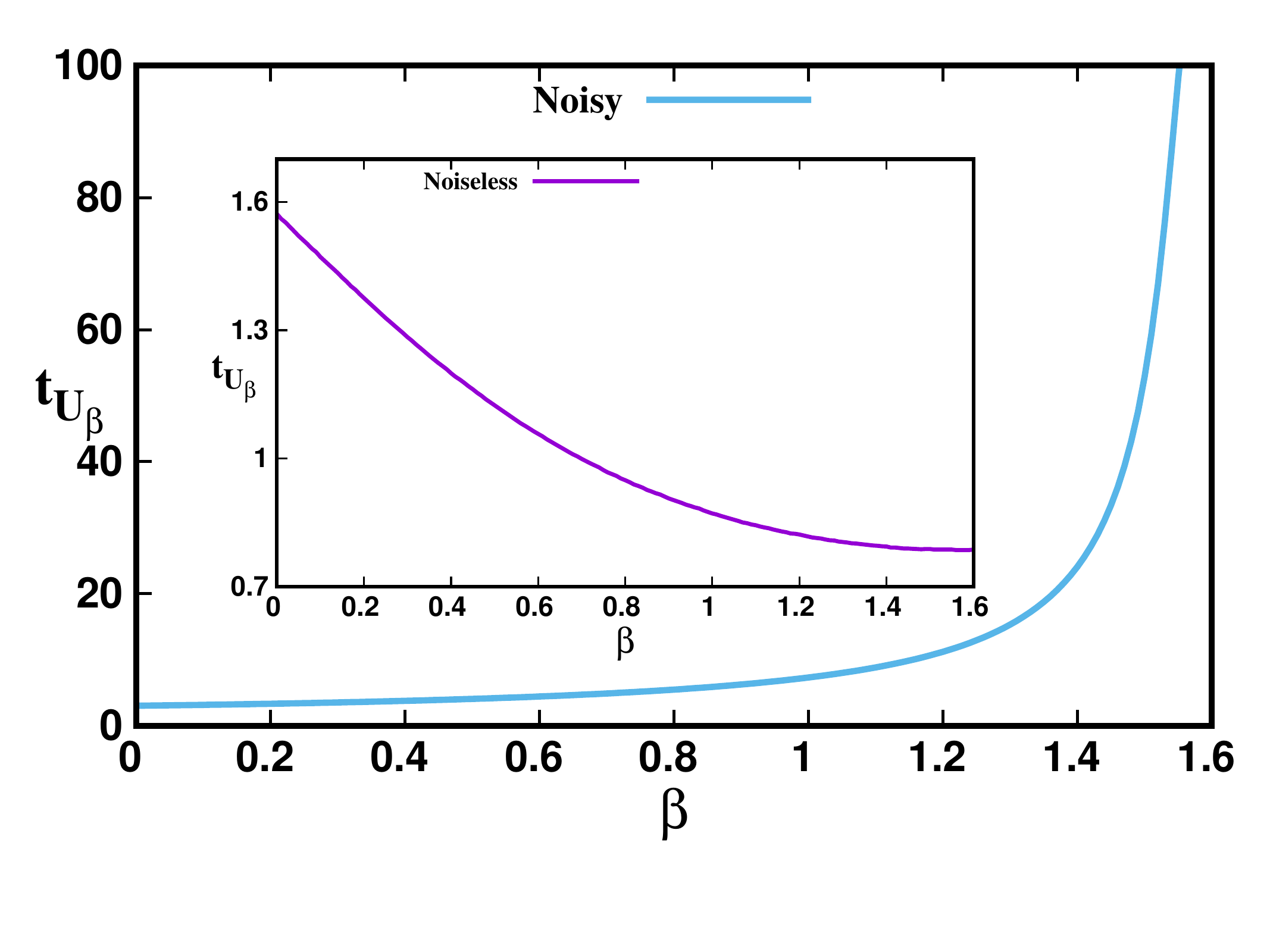}
\caption{(Color online.) $t_{U_\beta}$ (vertical axis) is plotted against $\beta$ (horizontal axis) where the initial state is $\ket +$ with $\omega_{\max}=2$ and $\zeta = 5$ for the noise model considered in Sec. \ref{sec:geodesic}. Optimal unitary in this scenario becomes $U_{\beta=0}=U_{\mu=0}$. In the inset we have compared the same for the unperturbed evolution. All the axes are dimensionless.}
\label{fig:handmade_optimzation} 
\end{figure}
Here, unlike in the previous case, $\beta = 0$ corresponds to the best work extraction scheme. In particular, we find $t_{U_{\beta=0}} \approx 3.026$, is the optimal path for noise model in Eq.~\eqref{eq:noiselam}. Note that $\beta = 0$ implies rotation along the axis $\bm{\hat{n}}= \frac{1}{\sqrt{2}}(1,0,1)$ as claimed in Sec. \ref{sec:geodesic}.  Therefore, we have $t^*=t_{U_{\beta=0}} = t_{U_{\mu=0}}^N$.

\end{document}